\newcommand{\IGNORE}[1]{}
\tikzstyle{block}=[draw opacity=0.7,line width=1.4cm]
\tikzstyle{graphnode}=[circle, draw, fill=black!20, inner sep=0pt, minimum width=6pt]
\tikzstyle{point}=[circle, draw, fill=black!30, inner sep=0pt, minimum width=1pt]
\tikzstyle{input}=[rectangle, draw, fill=black!75,inner sep=3pt, inner ysep=3pt, minimum width=4pt]
\tikzstyle{unmatched}=[graphnode,fill=black!0]
\tikzstyle{shaded}=[graphnode,fill=black!20]
\tikzstyle{matched}=[graphnode,fill=black!100]  	
\tikzstyle{matching} = [ultra thick]
\tikzset{
	>=stealth',
	pil/.style={
		->,
		thick,
		shorten <=2pt,
		shorten >=2pt,}
}
\tikzset{->-/.style={decoration={
			markings,
			mark=at position .5 with {\arrow{>}}},postaction={decorate}}}
\newtheorem{theorem}{Theorem}[section]
\newtheorem{lemma}[theorem]{Lemma}
\theoremstyle{definition}
\newtheorem{defn}[theorem]{Definition}
\newcounter{note}[section]
\newcommand{\OPT}{\textsc{OPT}}
\newcommand{\x}{\textbf{x}}
\def\eps {\epsilon}
\DeclareMathOperator*{\argmin}{arg\,min}
\newcommand{\ckc}{\ensuremath{\text{cost}}\xspace}
\newcommand{\vn}{\ensuremath{\Gamma\xspace}}
\title{More than one Author with different Affiliations}
\author[1]{D Ellis Hershkowitz\thanks{dhershko@cs.cmu.edu}\thanks{Supported in part by NSF grants CCF-1527110, CCF-1618280, CCF-1814603, 
		CCF-1910588, NSF CAREER award CCF-1750808 and a Sloan Research Fellowship.}}
\author[2]{Gregory Kehne\thanks{gkehne@andrew.cmu.edu}}
\affil[1]{Computer Science Department, Carnegie Mellon University}
\affil[2]{Department of Mathematical Sciences, Carnegie Mellon University}
\begin{document}

\title{Reverse Greedy is Bad for $k$-Center}

\date{}
\maketitle

\begin{abstract}
We show the reverse greedy algorithm is between a $(2k-2)$- and a $2k$-approximation for $k$-center.	
\end{abstract}

\section{Introduction}
Whereas greedy algorithms \emph{greedily add to} a solution provided feasibility is maintained, reverse greedy algorithms \emph{greedily remove from} a solution until feasibility is achieved. For instance, in the $k$-median problem we have to choose $k$ facilities from among $n$ points such that the total distance of all points to their nearest chosen facility is minimized. The reverse greedy algorithm for $k$-median begins with all points as facilities and then repeatedly removes the facility whose removal increases the $k$-median cost the least until only $k$ facilities remain. 

Reverse greedy algorithms have proven useful for $k$-median where greedy algorithms have not. For example, in a surprising result, Chrobak et al. \cite{chrobak2005reverse} showed that while the greedy algorithm for $k$-median can perform arbitrarily badly, reverse greedy is an $O(\log n)$-approximation.\footnote{This guarantee is surpassed by $O(1)$-approximations based on LP rounding; see, for example, Charikar et al. \cite{charikar1999constant} for the first $O(1)$-approximation. Also see Byrka et al. \cite{byrka2014improved} for the best current approximation algorithm for $k$-median is a $(2.675 + \eps)$- approximation.} Moreover, the fact that reverse greedy removes from rather than adds to a solution allowed Anthony et al. \cite{anthony2008plant} to combine reverse greedy with multiplicative weights to solve a two-stage version of $k$-median. However, the only facility location problem for which reverse greedy has been studied is $k$-median.

We study reverse greedy for $k$-center.
In the $k$-center problem we must choose $k$ centers in a metric so that the maximum distance from any point in the metric to its nearest center is minimized.
$k$-center is hard to approximate to within better than a factor of $2$ \cite{hsu1979,hochbaum1984}, and the natural greedy algorithm is a $2$-approximation \cite{hochbaum1985best}.
Since the reverse greedy algorithm outperforms the greedy algorithm for $k$-median and the greedy algorithm for $k$-center achieves the best possible approximation for a polynomial-time algorithm, one might naturally expect that the reverse greedy algorithm for $k$-center also achieves a good approximation.
If reverse greedy for $k$-center gave reasonable approximation guarantees, then it would provide a greedy heuristic which performs well for both $k$-center and $k$-median.
It might, then, inform solutions for problems which interpolate between $k$-center and $k$-median such as ordered $k$-median, which has been the subject of an exciting recent string of results \cite{byrka2018constant,chakrabarty2018interpolating,chakrabarty2019approximation}; most recently a $(5 + \eps)$-approximation for ordered $k$-median by Chakrabarty and Swamy \cite{chakrabarty2019approximation}.
We show that, surprisingly, reverse greedy attains no such good approximation. In particular we demonstrate that the natural reverse greedy algorithm attains an approximation factor of $\Theta(k)$ for $k$-center.

\begin{theorem}
Reverse greedy is between a $(2k-2)$-approximation and a $2k$-approximation for $k$-center.	
\end{theorem}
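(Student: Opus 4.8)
The plan is to prove the two bounds separately, writing $r_j$ for the $k$-center cost of reverse greedy's solution at the moment $j$ centers remain, so that $r_n = 0$, the sequence $r_n \le r_{n-1} \le \cdots \le r_k$ is nondecreasing (deleting a center never lowers the cost), and $r_k$ is the quantity to bound against the optimal $k$-center cost $\OPT$. I fix an optimal solution $o_1,\ldots,o_k$ and let $V_1,\ldots,V_k$ be the induced clusters (each point assigned to its nearest $o_i$), so every cluster has radius at most $\OPT$ and hence diameter at most $2\,\OPT$. Note that every surviving center, being a point, lies in some cluster.

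For the \emph{upper bound}, the engine is a per-step estimate: whenever strictly more than $k$ centers remain, two of them, say $c_1,c_2$, lie in a common cluster by pigeonhole, so $d(c_1,c_2)\le 2\,\OPT$; deleting $c_1$ reassigns each point formerly served by $c_1$ to a center at distance at most $r_{j+1}+d(c_1,c_2)\le r_{j+1}+2\,\OPT$, and since reverse greedy deletes the cost-minimizing center we obtain $r_j\le r_{j+1}+2\,\OPT$ at \emph{every} step. Summing this over all $n-k$ steps is hopelessly lossy, so the real content is to show the cost jumps by (essentially) $2\,\OPT$ only about $k$ times. The organizing observation is that so long as every cluster still contains a center, every point is within $2\,\OPT$ of one, whence $r_j\le 2\,\OPT$; thus the cost can first exceed $2\,\OPT$ only once some cluster has been emptied. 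Writing $u_j$ for the number of clusters containing no surviving center --- a quantity that only grows as centers are deleted and satisfies $u_k\le k-1$ --- I would establish the invariant $r_j\le 2\,\OPT\,(u_j+1)$ by induction down the trajectory, which at $j=k$ gives $r_k\le 2k\,\OPT$.

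The main obstacle is the inductive step when a deletion raises the cost \emph{without} emptying a new cluster: this occurs exactly when a point sitting in an already-empty cluster, which was being served by the deleted center, must chase a farther surviving center. I would control this with a charging argument that tracks such a point through the trajectory: at the instant its cluster $V_i$ was emptied it lay within the then-current radius of a surviving center, and each later deletion that displaces it moves it at most $2\,\OPT$ farther and can be charged to the emptying of a \emph{distinct} cluster. Since at most $k-1$ clusters are ever emptied, each point ends within $2\,\OPT+(k-1)\cdot 2\,\OPT = 2k\,\OPT$ of a center, matching the invariant. Making this bookkeeping airtight --- in particular ruling out a point being charged twice to the same emptying, so that the chain of displacements has length at most $k-1$ --- is the crux of the upper bound.

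For the \emph{lower bound} I would exhibit one instance per $k$ on which reverse greedy is forced to cost $(2k-2)\,\OPT$. The design is a line-like (or tree) metric built from $k$ clusters arranged in a chain with adjacent cluster centers at distance $2\,\OPT$, where each cluster carries a \emph{central} point --- the point an optimal solution keeps --- together with peripheral points; distances and point multiplicities are tuned so that, step by step, the locally cheapest deletion available to reverse greedy is to remove a cluster's central point and then collapse that cluster's load one link inward along the chain. Each collapse increases the service distance of the chain's far endpoint by $2\,\OPT$, so after the $k-1$ collapses the final solution leaves a point at distance $2(k-1)\,\OPT=(2k-2)\,\OPT$ while the optimum is $\OPT$. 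The delicate part here is not the arithmetic but certifying the trajectory: one must verify, by induction over the $n-k$ deletions and with tie-breaking pinned down so that it cannot rescue the algorithm, that at every step reverse greedy genuinely prefers the cascade-inducing deletion to every alternative, which is precisely what forces the $\Theta(k)$ blowup.
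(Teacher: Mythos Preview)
Your upper-bound plan has a genuine gap at exactly the point you flag as ``the crux.'' The invariant $r_j \le 2\,\OPT\,(u_j+1)$, with $u_j$ the number of emptied optimal balls, is precisely the naive potential the paper considers and rejects: it observes that between two consecutive cost thresholds $2l\cdot\OPT$ and $2(l+1)\cdot\OPT$ reverse greedy need \emph{not} empty any $O_t$, so $u_j$ can stagnate while $r_j$ climbs. Your proposed rescue---tracking a fixed client $p$ and charging each displacement to a distinct cluster-emptying---does not work as stated, for two reasons. First, the per-step inequality $r_j \le r_{j+1}+2\,\OPT$ bounds the \emph{maximum} service distance, not the service distance of any particular $p$; when $p$'s center is deleted, $p$'s new distance can jump by far more than $2\,\OPT$ even though the global max rises by at most $2\,\OPT$. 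Second, even granting a per-displacement bound, you give no mechanism tying a displacement of $p$ to the emptying of a new cluster, and indeed the deleted center's own cluster may still contain other facilities. The paper's actual fix is a different, sharper potential: the \emph{consolidation number} $\Gamma(F)$, the minimum number of diameter-$2\,\OPT$ sets needed to cover $F$ subject to the constraint that any two surviving facilities lying in a common $O_t$ must share a set. The key lemma shows that between consecutive critical levels $\Gamma$ drops by at least one---either because a ball empties \emph{or} because two consolidation sets merge---and since $1\le \Gamma\le k$ this caps the number of $2\,\OPT$ jumps at $k-1$.

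Your lower-bound sketch is in the right spirit but the geometry is off. A linear chain of clusters at pairwise distance $2\,\OPT$ does not obviously force reverse greedy to cascade inward; getting each of the $k$ surviving facilities to be ``pinned'' at every step requires more structure. The paper's construction is a hub-and-spoke, not a chain: a central star $C_0$ on $2k-1$ leaves, with peripheral stars $C_1,\dots,C_{k-1}$ of decreasing size whose leaves are \emph{matched} to the leaves of $C_0$ at weights $1,3,5,\dots,2k-3$, and whose centers are matched to designated leaves of $C_0$ at the same weights. The phased analysis then shows reverse greedy can first strip each $C_i$ down to one leaf, and thereafter, in paired phases, empty $C_i$ and delete one leaf of $C_0$; the matchings are what pin every remaining leaf of $C_0$ at each phase, and the growing match weights are what drive the cost to $2k-2$. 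Your ``distances and multiplicities are tuned'' hides exactly this design, which is the substance of the construction.
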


We begin with an upper bound in \Cref{lem:upper}. The principal novelty of our upper bound proof is a new potential function which enables a simple proof of reverse greedy's approximation ratio. Roughly speaking, our potential function describes the extent to which facilities in the optimal solution have been consolidated together in the metric space. We hope that our potential might prove useful in the analyses of other center problems. 

Despite the weakness of our $2k$ upper bound, we show in \Cref{lem:lower} that our upper bound is essentially tight, as witnessed by a $2k-2$ lower bound. Our lower bound construction exploits the fact that by greedily removing the facility which increases cost the least, reverse greedy can repeatedly remove peripheral facilities until the final $k$ facilities lie in a single tightly packed region in the metric. Our lower bound holds for any $n = \Omega(k^2)$; thus, roughly speaking, reverse greedy is an $O(\log n)$-approximation on k-median but a much worse $\Omega(\sqrt{n})$-approximation on k-center.

\section{$k$-Center and Reverse Greedy}
We now more formally describe the $k$-center problem and the reverse greedy algorithm. In the $k$-center problem an algorithm must pick $k$ centers such that the maximum distance of every point (a.k.a.\ client) to a chosen facility is minimized. Formally, an instance of $k$-center is given by a metric space $(d, \mathcal{C})$  and an integer $k \in \mathbb{N}$, where $d(c_1, c_2) \in \mathbb{R}^+$ gives the distance between $c_1, c_2 \in \mathcal{C}$. An algorithm must output a set of facilities $F \subseteq \mathcal{C}$ where $|F| \leq k$. Note that we make the standard assumption that every possible facility is also a client. The cost of a solution $F$ is $\ckc(F) :=\max_{c \in \mathcal{C}} d(c, F)$, where $d(c, F) := \min_{f \in F} d(c, f)$ for $c \in \mathcal{C}$ and $F \subseteq \mathcal{C}$. If $f' \in F$ is the unique facility for which $d(c,f') = \min_{f \in F} d(c, f)$ then we say that $f'$ \emph{serves} $c$.

The reverse greedy algorithm for $k$-center repeatedly removes the facility that increases the $k$-center cost the least, breaking ties arbitrarily, until only $k$ facilities remain:

\begin{algorithm}
	\renewcommand{\thealgorithm}{} 
	\floatname{algorithm}{}
	\caption{Reverse Greedy for $k$-Center}
\begin{algorithmic}	
	\State $F_0 \gets \mathcal{C}$
	\For{$i \in [n-k]$}
		\State $f_i \gets f \in \argmin_{f \in F_{i-1}} \ckc(F_{i-1} \setminus \{f\})$
		\State $F_i \gets F_{i-1} \setminus \{ f_i \}$
	\EndFor
	\Return $F_{n-k}$
\end{algorithmic}
\end{algorithm}


We let $\OPT := \ckc(F^*)$, where $F^* = \{o_1, \ldots, o_k\}$ is an optimal solution to $k$-center, and let $O_1, \ldots, O_k$ denote the corresponding $F^*$ balls, where $O_t = \{c \in \mathcal{C} : d(o_t, c) \leq \OPT \}$.

\section{Upper Bound}
We now prove that reverse greedy is a $2k$-approximation. To begin, as a simple observation notice that if $| F_{n-k} \cap O_t | \geq 1$ for every $t$ then by the triangle inequality reverse greedy is a $2$-approximation and so clearly a $2k$-approximation for any $k \geq 1$. Thus, for the remainder of this section we will be focused on the case where $|F_{n-k} \cap O_t | = 0$ for some $t$ which by the pigeonhole principle implies that $|F_{n-k} \cap O_{t'}| \geq 2$ for some $t'$.

Roughly, our proof strategy will be to show that after a cost increase of \emph{about} $2 \cdot \OPT$ we make 1 unit of progress for a potential function that starts at $k$ and is always positive. In order to execute this strategy we identify a sequence of reverse greedy states each of which costs about ${2 \cdot \OPT}$ more than the previous. Specifically, we consider all $l$-critical iterations of reverse greedy, defined as follows:
\begin{defn}[$l$-Critical, $c_l$]
	$F_i$ is $l$-critical if $\ckc(F_i) \leq 2l \cdot \OPT$ and $\ckc(F_{i+1}) > 2l \cdot \OPT$.	Let $c_l$ be the step $i$ for which $F_i$ is $l$-critical. We say $c_l$ is defined if such an $i$ exists.
\end{defn}

We will consider the critical states $F_0 = F_{c_0}, F_{c_1}, F_{c_2}, \ldots$ of reverse greedy and present a suitable potential which starts with value $k$ at state $F_{c_0}$, is always strictly greater than $0$, and decreases by at least $1$ between each pair of $F_{c_l}$ and $F_{c_{l+1}}$.
A natural candidate for our potential function is the number of $O_t$s with non-empty intersection with reverse greedy's solution. 
However, it is not too hard to see that reverse greedy does not necessarily always empty an $O_t$ from $F_{c_l}$ to $F_{c_{l+1}}$. 
Instead, we will measure our progress more cleverly.
In particular, we will show that between $F_{c_l}$ and $F_{c_{l+1}}$ either reverse greedy removes all facilities from some $O_t$ or it consolidates two sets of $O_t$s.
Roughly speaking, collections of $O_t$s are consolidated when their only remaining facilities are those in their intersection. 

Formally, we define the extent to which the $O_t$s have been consolidated with the following two notions:
\begin{defn}[Consolidation]
	Given an instance of $k$-center with optimal solution balls $O_1, \ldots, O_{k}$ defined as above, we define a consolidation of facilities $F \subseteq \mathcal{C}$ to be a collection $\Phi = \{ P_1, P_2, \ldots\}$ of $P_s \subseteq \mathcal{C}$ with the following properties:
	\begin{enumerate}
		\item $F \subseteq \bigcup_{s} P_s$; \qquad (Covering) 
		\item $\max_{x, y \in P_s}d(x, y) \leq 2 \cdot \OPT$ for all $P_s$; \qquad (Diameter)
		\item If $f, f' \in F \cap O_t$ then $f,f' \in P_s$ for some $P_s$. \qquad (Optimal Pairs)
	\end{enumerate}
\end{defn}

\begin{defn}[Consolidation Number]
	Let the consolidation number $\vn(F)$ of facilities $F\subseteq \mathcal{C}$ be the minimum cardinality of a consolidation of $F$.
\end{defn}

We begin by observing that $1 \leq \vn(F) \leq k$ for all nonempty $F \subseteq \mathcal{C}$, since a consolidation must cover $F$ and since ${\Phi = \{O_1, \ldots, O_k\}}$ is a consolidation of any $F\subseteq \mathcal{C}$. It is easy to verify that, more generally, if $\Phi$ is a consolidation of $F$ then it is also a consolidation of any subset of $F$.

Our next lemma shows that the consolidation number decreases by at least $1$ from each critical state to the next. For $f \in \mathcal{C}$ we define ${B_r(f) := \{f' \in \mathcal{C} : d(f, f') \leq r \cdot \OPT \}}$ to be the closed ball of radius $r \cdot \OPT$ centered at $f$.

\def\tikzScale{.5}
\def\chunkScale{1}

\tikzstyle{gregNode}=[circle,draw,fill=white, scale=0.8]
\tikzstyle{markNode}=[circle, scale=.5]
\tikzstyle{cliqueEdge}=[line width = .2mm,matchEdgeColorA]
\tikzstyle{matchEdge}=[line width = .2mm]
\tikzstyle{optBallStyle}=[circle, draw, minimum size =80, dotted, line width=.2mm]
\tikzstyle{ballStyle}=[line width = .2mm, matchEdgeColorA, dashed, line width=.2mm]

\definecolor{matchEdgeColorA}{HTML}{4285F4} 
\definecolor{matchEdgeColorB}{HTML}{FBBC05} 
\definecolor{matchEdgeColorC}{HTML}{34A853} 
\definecolor{matchEdgeColorD}{HTML}{EA4335} 
\definecolor{markColorn}{HTML}{888888}	
\definecolor{markColor0}{HTML}{FFFFFF}	
\definecolor{markColor1}{HTML}{4363D8}	
\definecolor{markColor2}{HTML}{00E6E6}	
\definecolor{markColor3}{HTML}{FBBC05}	
\definecolor{markColor4}{HTML}{FF8000}	
\definecolor{markColor5}{HTML}{34A853}	
\definecolor{markColor6}{HTML}{B12DD2}	
\definecolor{markColor7}{HTML}{EA4335}	
\definecolor{markColor8}{HTML}{800000}	

\newcommand{\convexPath}[2]{
	[   
	create hullnodes/.code={
		\global\edef\namelist{#1}
		\foreach [count=\counter] \nodename in \namelist {
			\global\edef\numberofnodes{\counter}
			\node at (\nodename) [draw=none,name=hullnode\counter] {};
		}
		\node at (hullnode\numberofnodes) [name=hullnode0,draw=none] {};
		\pgfmathtruncatemacro\lastnumber{\numberofnodes+1}
		\node at (hullnode1) [name=hullnode\lastnumber,draw=none] {};
	},
	create hullnodes
	]
	($(hullnode1)!#2!-90:(hullnode0)$)
	\foreach [
	evaluate=\currentnode as \previousnode using \currentnode-1,
	evaluate=\currentnode as \nextnode using \currentnode+1
	] \currentnode in {1,...,\numberofnodes} {
		-- ($(hullnode\currentnode)!#2!-90:(hullnode\previousnode)$)
		let \p1 = ($(hullnode\currentnode)!#2!-90:(hullnode\previousnode) - (hullnode\currentnode)$),
		\n1 = {atan2(\y1,\x1)},
		\p2 = ($(hullnode\currentnode)!#2!90:(hullnode\nextnode) - (hullnode\currentnode)$),
		\n2 = {atan2(\y2,\x2)},
		\n{delta} = {-Mod(\n1-\n2,360)}
		in 
		{arc [start angle=\n1, delta angle=\n{delta}, radius=#2]}
	}
	-- cycle
}

\def\mark(#1)[#2] { 
	\ifthenelse{#2 = -1}{
		\node[markNode, draw=markColorn, fill=markColorn] ($#1$ m) at (#1) {};}{}
	\ifthenelse{#2 = 0}{
		\node[markNode, draw=markColor0, fill=markColor0] ($#1$ m) at (#1) {};}{}
	\ifthenelse{#2 = 1}{
		\node[markNode, draw=markColor1, fill=markColor1] ($#1$ m) at (#1) {};}{}
	\ifthenelse{#2 = 2}{
		\node[markNode, draw=markColor2, fill=markColor2] ($#1$ m) at (#1) {};}{}
	\ifthenelse{#2 = 3}{
		\node[markNode, draw=markColor3, fill=markColor3] ($#1$ m) at (#1) {};}{}
	\ifthenelse{#2 = 4}{
		\node[markNode, draw=markColor4, fill=markColor4] ($#1$ m) at (#1) {};}{}
	\ifthenelse{#2 = 5}{
		\node[markNode, draw=markColor5, fill=markColor5] ($#1$ m) at (#1) {};}{}
	\ifthenelse{#2 = 6}{
		\node[markNode, draw=markColor6, fill=markColor6] ($#1$ m) at (#1) {};}{}
	\ifthenelse{#2 = 7}{
		\node[markNode, draw=markColor7, fill=markColor7] ($#1$ m) at (#1) {};}{}
	\ifthenelse{#2 = 8}{
		\node[markNode, draw=markColor8, fill=markColor8] ($#1$ m) at (#1) {};}{}
}

\def\drawFirstUBFig{
	\node[gregNode, label=below:$c$] at (0,0)   (c) {};		
	\node[gregNode] at (3.5,-2.5)   (chuff1) {};		
	\node[gregNode] at (4,2)   (chuff2) {};		
	\node[gregNode, label=below:$f'$] at (7.5,-0.25)   (fprime) {};		
	\node[gregNode] at (9.5,-2)   (chuff4) {};		
	\node[gregNode, label=below:$f$] at (12.25,0.75)   (f) {};		
	\node[gregNode] at (14.25,1.5)  (center) {};		
	\node[gregNode, label=below:$g$] at (14.5,-0.5)   (g) {};			
	\node[gregNode] at (14,-3)   (gprime) {};
	
	\draw (c) ++(38:8) arc (38:-38:8)[ballStyle];
	\node[auto, label=below:$B_{2l}(c)$] (l1) at (6.5,-4.5){};
	\draw (c) ++(23:13) arc (23:-23:13)[ballStyle];
	\node[auto, label=below:$B_{2(l+1)}(c)$] (l2) at (12.25,-4.5){};
		
	\node[optBallStyle] at (center) (O1) {};
	
	\draw[markColor7] \convexPath{fprime,f,chuff4}{1cm};
	\node[auto, color=markColor7, label=above:$P_s$] (p1) at (9.25,1){};

	\draw[markColor7] \convexPath{g,gprime,f}{1cm};
	\node[auto, color=markColor7, label=right:$P_q$] (p2) at (14.75,-2.25){};
			
	\mark(c)[-1]
	\mark(chuff1)[-1]
	\mark(chuff2)[-1]
	\mark(center)[-1]
}

\def\drawSecondUBFig{
	\node[gregNode, label=below:$c$] at (0,0)   (c) {};		
	\node[gregNode] at (3.5,-2.5)   (chuff1) {};		
	\node[gregNode] at (4,2)   (chuff2) {};		
	\node[gregNode, label=below:$f'$] at (7.5,-0.25)   (fprime) {};		
	\node[gregNode] at (9.5,-2)   (chuff4) {};		
	\node[gregNode, label=below:$f$] at (12.25,0.75)   (f) {};		
	\node[gregNode] at (14.25,1.5)  (center) {};		
	\node[gregNode, label=below:$g$] at (14.5,-0.5)   (g) {};			
	\node[gregNode] at (14,-3)   (gprime) {};
	
	\draw (c) ++(38:8) arc (38:-38:8)[ballStyle];
	\node[auto, label=below:$B_{2l}(c)$] (l1) at (6.5,-4.5){};
	\draw (c) ++(23:13) arc (23:-23:13)[ballStyle];
	\node[auto, label=below:$B_{2(l+1)}(c)$] (l2) at (12.25,-4.5){};
	
	\node[optBallStyle] at (center) (O1) {};
	
	
	\draw[markColor7] \convexPath{g,gprime,f}{1cm};
	\node[auto, color=markColor7, label=right:$P_q$] (p2) at (14.75,-2.25){};
	
	\mark(c)[-1]
	\mark(chuff1)[-1]
	\mark(chuff2)[-1]
	\mark(chuff4)[-1]
	\mark(center)[-1]
	\mark(fprime)[-1]
	\mark(gprime)[-1]
}

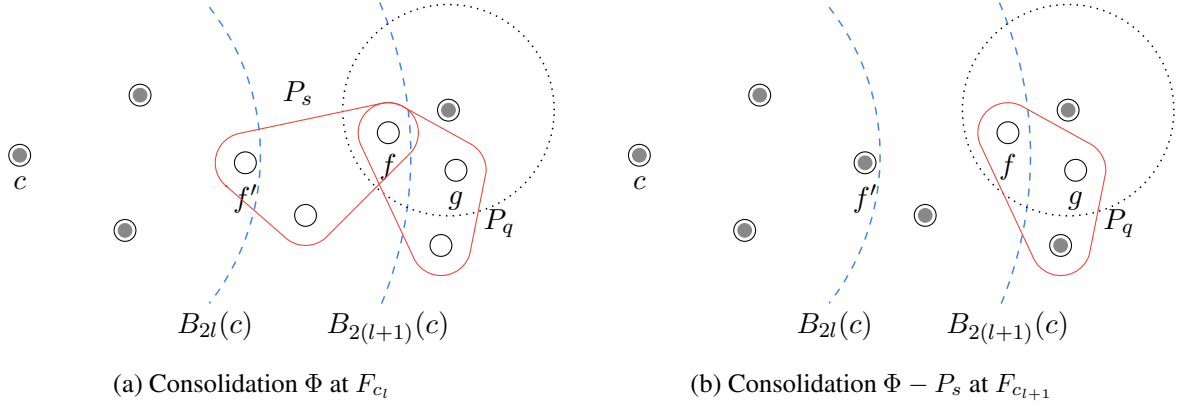
\begin{figure}
	\centering
	\begin{subfigure}[t]{0.4\textwidth}
		\centering
		\begin{tikzpicture}[scale=0.8*\tikzScale]
		\drawFirstUBFig
		\end{tikzpicture}
		\caption{Consolidation $\Phi$ at $F_{c_l}$} \label{fig:upperBound1} 
	\end{subfigure}
	\qquad \qquad
	\begin{subfigure}[t]{0.4\textwidth}
		\centering
		\begin{tikzpicture}[scale=0.8*\tikzScale]
		\drawSecondUBFig
		\end{tikzpicture}
		\caption{Consolidation $\Phi - P_s$  at $F_{c_{l+1}}$} \label{fig:upperBound2} 
	\end{subfigure}
	\caption{The configuration of clients, facilities, balls, and consolidation sets described in the proof of \Cref{lem:VNDec}. Points in $\mathcal{C}$ are small circles; points removed by reverse greedy are colored in grey; $O_t$ is larger dotted circle; edges of $B_{2l}(c)$ and $B_{2(l+1)}(c)$ are dashed in blue; $P_s$ and $P_q$ are outlined in red. Facilities $f'$ and $f$ serve client $c$ for $F_{c_l}$ and $F_{c_{l+1}}$, respectively. Set $P_s$ may be removed from the consolidation at $F_{c_{l+1}}$ because $P_q$ also covers $f$.} \label{fig:upperBound} 
\end{figure}

\begin{lemma}[$\vn$ decreases by $1$]\label{lem:VNDec}
	If there exists an $O_t$ such that $|O_t \cap F_{n-k}| \geq 2$ then $\vn(F_{c_{l+1}}) < \vn (F_{c_l})$ for all $l$ such that $c_l$ and $c_{l+1}$ are defined.
\end{lemma}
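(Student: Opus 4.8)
The plan is to start from an optimal consolidation of $F_{c_l}$ and to exhibit a strictly smaller consolidation of $F_{c_{l+1}}$. Since $F_{c_{l+1}} \subseteq F_{c_l}$, the remark preceding the lemma already tells us that any consolidation of $F_{c_l}$ is also a consolidation of $F_{c_{l+1}}$, so $\vn(F_{c_{l+1}}) \le \vn(F_{c_l})$ for free. To get the strict inequality it suffices to fix an optimal consolidation $\Phi$ of $F_{c_l}$, with $|\Phi| = \vn(F_{c_l})$, identify one set $P_s \in \Phi$, and show that $\Phi \setminus \{P_s\}$ (possibly after absorbing the few survivors of $P_s$ into a neighboring set) is still a consolidation of the smaller set $F_{c_{l+1}}$.

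First I would pin down the client driving the cost increase and its two servers. Since $\ckc(F_{c_l+1}) > 2l\cdot\OPT$, pick a client $c$ with $d(c, F_{c_l+1}) > 2l\cdot\OPT$; as $d(c, F_{c_l}) \le 2l\cdot\OPT$, the facility $f'$ removed at step $c_l+1$ must serve $c$ in $F_{c_l}$, so $f' \in B_{2l}(c)$. Because $F_{c_{l+1}} \subseteq F_{c_l+1}$, no facility of $F_{c_{l+1}}$ lies in $B_{2l}(c)$; hence the server $f$ of $c$ in $F_{c_{l+1}}$ satisfies $f \in B_{2(l+1)}(c)\setminus B_{2l}(c)$ (using $\ckc(F_{c_{l+1}}) \le 2(l+1)\cdot\OPT$), and $f \ne f'$ since $f'$ has already been removed. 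Now take $P_s \in \Phi$ to be a set containing $f'$ (Covering). As $P_s$ has diameter at most $2\cdot\OPT$ and $f' \in B_{2l}(c)$, the triangle inequality gives $P_s \subseteq B_{2(l+1)}(c)$; combined with the annulus observation, every surviving facility lies in $P_s \cap F_{c_{l+1}} \subseteq B_{2(l+1)}(c)\setminus B_{2l}(c)$, and in particular $f' \notin F_{c_{l+1}}$.

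The remaining work is to verify that deleting $P_s$ preserves Covering and Optimal Pairs for $F_{c_{l+1}}$, i.e.\ that the survivors of $P_s$ are redundant; this is where I would split into the two cases foreshadowed in the text. In the first case some optimal ball $O_t$ meeting $F_{c_l}$ has been emptied by step $c_{l+1}$, so the Optimal Pairs obligations that $P_s$ was carrying for $O_t$ become vacuous and its surviving facilities can be rerouted, making $P_s$ removable. In the second case no $O_t$ is emptied; here I would argue that each facility of $P_s \cap F_{c_{l+1}}$ shares an optimal ball with the new server $f$, so that it lies within $2\cdot\OPT$ of $f$ and is covered by the set $P_q \in \Phi$ containing $f$ (available because the hypothesis $|O_t \cap F_{n-k}| \ge 2$ forces at least two facilities to persist in a common ball, which is exactly the Optimal Pairs witness that places $f$ in a set $P_q \ne P_s$). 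In both cases $\Phi \setminus \{P_s\}$, after at most absorbing $P_s$'s survivors into $P_q$, covers $F_{c_{l+1}}$ and keeps every optimal pair together, so $\vn(F_{c_{l+1}}) \le \vn(F_{c_l}) - 1$.

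The crux, and the step I expect to be most delicate, is the second case: showing that every facility of $P_s$ outliving step $c_{l+1}$ lands in the same optimal ball as $f$, so that dropping $P_s$ neither uncovers a facility nor separates an optimal pair, and that the set $P_q$ absorbing them still obeys the diameter bound $2\cdot\OPT$. Making the informal phrase ``$P_q$ also covers $f$'' from \Cref{fig:upperBound} into a rigorous statement about all survivors of $P_s$, and correctly invoking the hypothesis $|O_t \cap F_{n-k}| \ge 2$ to guarantee the consolidating set $P_q$ exists, is the heart of the argument; the distance bookkeeping around $c$, $f'$, and $f$ is routine triangle-inequality work once the right client $c$ is fixed.
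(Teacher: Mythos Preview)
Your overall architecture --- take a minimal consolidation $\Phi$ of $F_{c_l}$, locate a client $c$, a removed facility $f'$, and a set $P_s\ni f'$, then argue $\Phi\setminus\{P_s\}$ consolidates $F_{c_{l+1}}$ --- matches the paper. But the order in which you pick objects creates a real gap.

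You fix $c$ as a witness to the cost jump at step $c_l+1$, which only yields $B_{2l}(c)\cap F_{c_{l+1}}=\emptyset$. Combined with $P_s\subseteq B_{2(l+1)}(c)$, all you can say is $P_s\cap F_{c_{l+1}}\subseteq B_{2(l+1)}(c)\setminus B_{2l}(c)$; this annulus may contain many surviving facilities, not just your $f$. You then assert that every such survivor ``shares an optimal ball with $f$,'' but nothing in the setup forces that, and without it you can neither drop $P_s$ outright nor absorb its survivors into a single $P_q$ while preserving Diameter and Optimal Pairs. Your case split on whether some $O_t$ is emptied does not help here either. Finally, your $f$ is just the server of $c$ in $F_{c_{l+1}}$; it has no a priori connection to the hypothesized ball $O_t$ with $|O_t\cap F_{n-k}|\ge 2$, so the existence of a distinct $P_q\ni f$ is not guaranteed.

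The paper reverses the construction, and that is the missing idea: start by fixing $f,g\in O_t\cap F_{n-k}$ from the hypothesis. Since $F_{c_{l+1}}$ is $(l+1)$-critical, removing \emph{any} facility (in particular $f$) pushes the cost above $2(l+1)\cdot\OPT$; choose $c$ to be a client that $f$ serves with this property. Then $B_{2(l+1)}(c)\cap F_{c_{l+1}}=\{f\}$ exactly, so $P_s\cap F_{c_{l+1}}\subseteq\{f\}$ --- at most one survivor, not many. And because $f,g\in O_t\cap F_{c_l}$, Optimal Pairs gives a $P_q\in\Phi$ containing both; since $g\in F_{c_{l+1}}\setminus B_{2(l+1)}(c)$ we have $g\notin P_s$, hence $P_q\ne P_s$ and $\Phi\setminus\{P_s\}$ still covers $f$. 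No absorption or case analysis is needed.
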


\begin{proof}	
	Let $f$ and $g$ be a pair of facilities in $O_t \cap F_{n-k}$.
	Because $f$ is present in $F_{c_{l+1}}$ and $F_{c_{l+1}}$ is $(l + 1)$-critical, in $F_{c_{l+1}}$ we know that $f$ is serving some client $c$ for which the next nearest facility $\bar{f} \in F_{c_{l+1}}$ has cost ${d(c, \bar{f}) > 2(l+1)\cdot\OPT}$.
	In other words, ${B_{2(l+1)}(c) \cap F_{c_{l+1}} = \{f\}}$.
	This $c$ must have been served by some facility $f' \in F_{c_l}$ for which ${d(c, f')\leq 2l\cdot \OPT}$. 
	Let $\Phi$ be a minimal consolidation for $F_{c_l}$, and choose a $P_s \in \Phi$ for which $f' \in P_s$. Such a $P_s$ must exist by the covering property of consolidations. 
	See \Cref{fig:upperBound1}.
	We will argue that $\Phi - P_s$ is a consolidation for $F_{c_{l+1}}$, and so
	\[
	\vn(F_{c_{l+1}}) \leq |\Phi - P_s| < |\Phi| = \vn(F_{c_l}).
	\]
	
	To see this, first observe that by the diameter property of consolidations, $P_s \subseteq B_2(f')$, and so since ${d(c, f')\leq 2l\cdot \OPT}$, the triangle inequality implies that ${B_2(f') \subseteq B_{2(l+1)}(c)}$.	
	Therefore ${P_s \subseteq B_{2(l+1)}(c)}$, and so 
	\[
	P_s \cap F_{c_{l+1}} \subseteq B_{2(l+1)}(c) \cap F_{c_{l+1}} = \{f\}.
	\]
	If $P_s\cap F_{c_{l+1}} =\emptyset$, then it is easy to verify that $\Phi - P_s$ is a consolidation of $F_{c_{l+1}}$ and so $\vn(F_{c_{l+1}}) < \vn(F_{c_{l}})$.
	
	Otherwise $P_s \cap F_{c_{l+1}} = \{f\}$, as is the case in \Cref{fig:upperBound2}.
	Since $\Phi$ is a consolidation of $F_{c_{l}}$ and $\{f\}$ is a singleton, $\Phi - P_s$ still satisfies the diameter and optimal pair consolidation criteria. We need only worry, then, that $\Phi - P_s$ will fail to be a consolidation by ceasing to cover $F_{c_{l+1}}$---in particular by failing to cover $f$. 
	However this is not the case; there must be a $ P_q \in \Phi$ such that $f \in P_q \neq P_s$, as in \Cref{fig:upperBound2}.
	The existence of such a $P_q$ follows from the assumption that $f,g \in O_t$; then because $\Phi$ satisfies the optimal pair consolidation property and ${f, g \in F_{c_l}}$, there is some $P_q\in \Phi$ such that $f,g \in P_q$.
	Finally, $P_q \neq P_s$ because $P_s \subseteq B_{2(l+1)}(c)$ and ${d(c, g) > 2(l+1)\cdot \OPT}$, and so $g \not\in P_s$.
	Therefore there is a $P_q\in \Phi$ distinct from $P_s$ which covers $f$, and so $\Phi - P_s$ is indeed a consolidation of $F_{c_{l+1}}$, implying that $\vn(F_{c_{l+1}}) < \vn(F_{c_{l}})$.
\end{proof}

We now conclude our upper bound.

\begin{lemma} \label{lem:upper}
	Reverse greedy is a $2k$-approximation for $k$-center.
\end{lemma}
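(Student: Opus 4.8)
The plan is to use \Cref{lem:VNDec} as a black box and finish with a short potential/counting argument over the critical states. First I would dispose of the easy case already isolated above: if $|F_{n-k}\cap O_t|\ge 1$ for every $t$, then the triangle inequality gives a $2$-approximation, hence a $2k$-approximation for $k\ge 1$. So I may assume $|F_{n-k}\cap O_t|=0$ for some $t$, which by pigeonhole forces $|F_{n-k}\cap O_{t'}|\ge 2$ for some $t'$; this is exactly the hypothesis needed to invoke \Cref{lem:VNDec}.

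Next I would record two structural facts. Since $F_i\subseteq F_{i-1}$, removing a facility only (weakly) increases every client's service distance, so the costs $\ckc(F_0)\le \ckc(F_1)\le\cdots\le\ckc(F_{n-k})$ form a monotone nondecreasing sequence, and $\ckc(F_0)=\ckc(\mathcal{C})=0$ because every client is a facility. Monotonicity gives the key consequence that $c_l$ is \emph{undefined} only if $\ckc(F_{n-k})\le 2l\cdot\OPT$: if the final cost exceeded $2l\cdot\OPT$, the cost would climb from $0$ across the threshold $2l\cdot\OPT$, defining $c_l$. I would also recall the earlier observation that $1\le\vn(F)\le k$ for every nonempty $F$, which applies to each $F_{c_l}$ since it contains at least $k$ facilities.

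The heart of the argument is then a one-line count. Suppose for contradiction that $\ckc(F_{n-k})>2k\cdot\OPT$. By the monotonicity consequence, every threshold $2l\cdot\OPT$ for $l=0,1,\ldots,k$ is crossed, so $c_0,c_1,\ldots,c_k$ are all defined. Applying \Cref{lem:VNDec} to each consecutive pair yields
\[
\vn(F_{c_0})>\vn(F_{c_1})>\cdots>\vn(F_{c_k}).
\]
This is a strictly decreasing chain of $k+1$ integers in $[1,k]$, which forces $\vn(F_{c_0})\ge\vn(F_{c_k})+k\ge k+1$, contradicting $\vn(F_{c_0})\le k$. Hence $\ckc(F_{n-k})\le 2k\cdot\OPT$, i.e.\ reverse greedy is a $2k$-approximation.

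The conceptual work having been done in \Cref{lem:VNDec}, I expect no genuine obstacle; the only points requiring care are getting the cost monotonicity right (so that a final cost above $2k\cdot\OPT$ truly forces all of $c_0,\ldots,c_k$ to be defined) and the off-by-one in the count (a strictly decreasing chain of length $k+1$ cannot fit inside $\{1,\ldots,k\}$). One subtlety worth flagging is that the per-step cost increase is not a priori bounded by $2\cdot\OPT$, so one might worry that $c_l=c_{l+1}$ for some $l$; but the strict inequality supplied by \Cref{lem:VNDec} whenever both indices are defined makes the displayed chain valid regardless, and in particular rules out such coincidences among defined critical states.
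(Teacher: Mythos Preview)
Your proposal is correct and follows essentially the same approach as the paper: both reduce to the easy case, invoke pigeonhole to meet the hypothesis of \Cref{lem:VNDec}, and then telescope the strict drop in $\vn$ over the critical states against the bounds $1\le\vn\le k$. The paper packages this directly by letting $\bar{l}$ be the largest defined critical index and concluding $\bar{l}\le k-1$, whereas you argue by contradiction that $c_0,\ldots,c_k$ cannot all be defined; these are the same count. Your explicit handling of cost monotonicity and of the possibility $c_l=c_{l+1}$ is more careful than the paper's exposition, but does not constitute a different route.
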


\begin{proof}
	If for every $O_t$ we have $|O_t \cap F_{n-k}| \geq 1$, then reverse greedy is clearly a $2$-approximation by the triangle inequality and therefore a $2k$-approximation for all $k \geq 1$. 
	Thus we may assume that there is some $O_t$ such that $|O_t \cap F_{n-k}| = 0$, and so by the pigeonhole principle there exists some $O_{t'}$ for which $|O_{t'} \cap F_{n-k}| \geq 2$.
	
	Let $\bar{l}$ be the largest $l$ such that $c_{l}$ is defined. By \Cref{lem:VNDec} we have $1 \leq \vn(F_{c_l}) - \vn(F_{c_{l+1}})$; applying this $\bar{l}$ times and recalling that $1\leq \vn(F) \leq k$, we find that
	\[
	\bar{l} \leq \vn(F_{c_0}) - \vn(F_{c_{\bar{l}} } ) \leq k-1.
	\]
	The maximality of $\bar{l}$ and the definition of $l$-criticality then imply that
	\begin{align*}
	\ckc(F_{n-k}) &< \ckc(F_{c_{\bar{l}}}) + 2 \cdot \OPT \\ 
	&\leq 2  \bar{l} \cdot \OPT + 2 \cdot \OPT \\
	& \leq 2 k \cdot \OPT,
	\end{align*}
	and therefore reverse greedy is a $2k$-approximation for $k$-center.
\end{proof}

\section{Lower Bound}
In this section we show that reverse greedy is at best a $(2k-2)$-approximation for $k$-center. 

We aim to formalize the intuition that by greedily removing the facility which increases cost the least, reverse greedy can repeatedly remove peripheral facilities until the final $k$ facilities lie in a single tightly packed region in the metric. If reverse greedy ends with its facilities packed into single region, then it must have been the case that for each facility $f \in F_{n-k}$ and each iteration $i$, $f$ served a client that had no alternative facility within distance $\ckc(F_{i})$. Thus, to produce an instance of $k$-center where reverse greedy performs badly we must produce a metric with a tightly packed region of centers where in every iteration of reverse greedy, each one of these centers serves some client whose second closest center is further than $\ckc(F_{i})$.

\def\tikzScale{.5}
\def\chunkScale{1}
\def\wholeScale{2}

\tikzstyle{gregNode}=[circle,draw,fill=white, scale=0.8]
\tikzstyle{markNode}=[circle, scale=.5]
\tikzstyle{cliqueEdge}=[line width = .2mm,matchEdgeColorA]
\tikzstyle{matchEdge}=[line width = .2mm]

\definecolor{matchEdgeColorA}{HTML}{4285F4} 
\definecolor{matchEdgeColorB}{HTML}{FBBC05} 
\definecolor{matchEdgeColorC}{HTML}{34A853} 
\definecolor{matchEdgeColorD}{HTML}{EA4335} 
\definecolor{markColor0}{HTML}{FFFFFF}	
\definecolor{markColor1}{HTML}{4363D8}	
\definecolor{markColor2}{HTML}{00E6E6}	
\definecolor{markColor3}{HTML}{FBBC05}	
\definecolor{markColor4}{HTML}{FF8000}	
\definecolor{markColor5}{HTML}{34A853}	
\definecolor{markColor6}{HTML}{B12DD2}	
\definecolor{markColor7}{HTML}{EA4335}	
\definecolor{markColor8}{HTML}{800000}	

\def\gregLine(#1,#2)[#3][#4][#5]{ 
	\ifthenelse{\equal{#5}{true}}{
		\draw #1 edge [#3,#4] ($#1!0.15!#2$) edge [dotted, #3, #4] ($#1!0.3!#2$);
		\draw #2 edge [#3,#4] ($#2!0.15!#1$) edge [dotted, #3, #4] ($#2!0.3!#1$);
	}{	
		\draw #1 edge [#3,#4] ($#1!0.075!#2$) edge [dotted, #3, #4] ($#1!0.15!#2$);
		\draw #2 edge [#3,#4] ($#2!0.075!#1$) edge [dotted, #3, #4] ($#2!0.15!#1$);
	}
}

\def\drawTopMatchingEdges(#1,#2)[#3][#4]{
	\gregLine((#1 1), (#2 1))[#3][#4][false];
	\gregLine((#1 2), (#2 2))[#3][#4][false];
	\gregLine((#1 3), (#2 3))[#3][#4][false];
	\gregLine((#1 4), (#2 4))[#3][#4][false];
	\gregLine((#1 5), (#2 5))[#3][#4][false];
}

\def\drawBottomMatchingEdges(#1,#2)[#3][#4][#5]{
	\gregLine((#1 6),(#2 6))[#3][#4][true];
	\ifthenelse{#5 > 1}{
		\gregLine((#1 7),(#2 7))[#3][#4][true];
	}{}
	\ifthenelse{#5 > 2}{
		\gregLine((#1 8),(#2 8))[#3][#4][true];
	}{}
}

\def\mark(#1)[#2] { 
	\ifthenelse{#2 = 0}{
		\node[markNode, draw=markColor0, fill=markColor0] ($#1$ m) at (#1) {};}{}
	\ifthenelse{#2 = 1}{
		\node[markNode, draw=markColor1, fill=markColor1] ($#1$ m) at (#1) {};}{}
	\ifthenelse{#2 = 2}{
		\node[markNode, draw=markColor2, fill=markColor2] ($#1$ m) at (#1) {};}{}
	\ifthenelse{#2 = 3}{
		\node[markNode, draw=markColor3, fill=markColor3] ($#1$ m) at (#1) {};}{}
	\ifthenelse{#2 = 4}{
		\node[markNode, draw=markColor4, fill=markColor4] ($#1$ m) at (#1) {};}{}
	\ifthenelse{#2 = 5}{
		\node[markNode, draw=markColor5, fill=markColor5] ($#1$ m) at (#1) {};}{}
	\ifthenelse{#2 = 6}{
		\node[markNode, draw=markColor6, fill=markColor6] ($#1$ m) at (#1) {};}{}
	\ifthenelse{#2 = 7}{
		\node[markNode, draw=markColor7, fill=markColor7] ($#1$ m) at (#1) {};}{}
	\ifthenelse{#2 = 8}{
		\node[markNode, draw=markColor8, fill=markColor8] ($#1$ m) at (#1) {};}{}
}

\def\drawChunk(#1,#2)(#3)[#4][#5](#6)[#7][#8]{ 
	\pgfmathtruncatemacro{\flip}{1}
	\ifthenelse{\equal{#8}{true}}{
		\pgfmathtruncatemacro{\flip}{-1};}{}
	
	\pgfmathtruncatemacro{\x}{#1}
	\pgfmathtruncatemacro{\y}{#2}
	\node[gregNode] at (\x, \y)   (#3 c) {};		
	\node[gregNode] at (\x+0.5*\chunkScale-2*\chunkScale, \y+\flip*1.25*\chunkScale)   (#3 1) {};		
	\node[gregNode] at (\x+0.5*\chunkScale-1*\chunkScale, \y+ \flip*1.25*\chunkScale)   (#3 2) {};		
	\node[gregNode] at (\x+0.5*\chunkScale, \y+\flip*1.25*\chunkScale)   (#3 3) {};		
	\node[gregNode] at (\x+0.5*\chunkScale+\chunkScale, \y+\flip*1.25*\chunkScale)   (#3 4) {};		
	\node[gregNode] at (\x+0.5*\chunkScale+2*\chunkScale, \y+\flip*1.25*\chunkScale)   (#3 5) {};		

	\ifthenelse{#4 > 2}{
		\node[gregNode] at (\x+0.5*\chunkScale, \y-\flip*\chunkScale)   (#3 8) {};
	}{}		
	\ifthenelse{#4 > 1}{
		\node[gregNode] at (\x+0.5*\chunkScale+\chunkScale, \y-\flip*\chunkScale)   (#3 7) {};		
	}{}
	\ifthenelse{#4 > 0}{
		\node[gregNode] at (\x+0.5*\chunkScale+2*\chunkScale, \y-\flip*\chunkScale)   (#3 6) {};		
	}{}
	
	\draw[cliqueEdge] (#3 c) -- (#3 1);
	\draw[cliqueEdge] (#3 c) -- (#3 2);
	\draw[cliqueEdge] (#3 c) -- (#3 3);
	\draw[cliqueEdge] (#3 c) -- (#3 4);
	\draw[cliqueEdge] (#3 c) -- (#3 5);
	\ifthenelse{#4 > 0}{
		\draw[cliqueEdge] (#3 c) -- (#3 6);	
	}{}	
	\ifthenelse{#4 > 1}{
		\draw[cliqueEdge] (#3 c) -- (#3 7);
	}{}
	\ifthenelse{#4 > 2}{
		\draw[cliqueEdge] (#3 c) -- (#3 8);
	}{}
	
	\ifthenelse{\equal{#7}{true}}{
		\node[#5] at ({#1}, {#2})   (k #3) {$#6$};}{}
}

\def\drawBigFig{
	\drawChunk(0,2*\wholeScale)(C)[3][above=20*\chunkScale](C_0)[true][false]

	\drawChunk(0,-\wholeScale)(C1)[3][below=20*\chunkScale](C_1)[true][true]
	\drawChunk(3*\wholeScale, -\wholeScale)(C2)[2][below=20*\chunkScale](C_2)[true][true]
	\drawChunk(6*\wholeScale, -\wholeScale)(C3)[1][below=20*\chunkScale](C_3)[true][true]
	\drawChunk(10*\wholeScale, -\wholeScale)(C4)[0][below=20*\chunkScale](C_4)[true][true]
	
	\drawTopMatchingEdges(C,C1)[matchEdge][matchEdgeColorA]
	\drawTopMatchingEdges(C,C2)[matchEdge][matchEdgeColorB]
	\drawTopMatchingEdges(C,C3)[matchEdge][matchEdgeColorC]
	\drawTopMatchingEdges(C,C4)[matchEdge][matchEdgeColorD]
	
	\drawBottomMatchingEdges(C,C1)[matchEdge][matchEdgeColorA][3]
	\drawBottomMatchingEdges(C,C2)[matchEdge][matchEdgeColorB][2]
	\drawBottomMatchingEdges(C,C3)[matchEdge][matchEdgeColorC][1]
	
	\draw[matchEdge,matchEdgeColorA] (C1 c) -- (C c) node[draw=none,fill=none, right, midway] {1};
	\draw[matchEdge,matchEdgeColorB] (C2 c) -- (C 8) node[draw=none,fill=none, above right, midway] {3};
	\draw[matchEdge,matchEdgeColorC] (C3 c) -- (C 7) node[draw=none,fill=none, above right, midway] {5};
	\draw[matchEdge,matchEdgeColorD] (C4 c) -- (C 6) node[draw=none,fill=none, above right, midway] {7};
	
	\mark(C c)[2]
	\mark(C 1)[0]
	\mark(C 2)[0]
	\mark(C 3)[0]
	\mark(C 4)[0]
	\mark(C 5)[0]
	\mark(C 6)[8]
	\mark(C 7)[6]
	\mark(C 8)[4]
	
	\mark(C1 c)[1]
	\mark(C1 1)[1]
	\mark(C1 2)[1]
	\mark(C1 3)[1]
	\mark(C1 4)[1]
	\mark(C1 5)[1]
	\mark(C1 6)[1]
	\mark(C1 7)[1]
	\mark(C1 8)[1]
	
	\mark(C2 c)[1]
	\mark(C2 1)[3]
	\mark(C2 2)[2]
	\mark(C2 3)[2]
	\mark(C2 4)[2]
	\mark(C2 5)[2]
	\mark(C2 6)[2]
	\mark(C2 7)[2]
	
	\mark(C3 c)[1]
	\mark(C3 1)[5]
	\mark(C3 2)[2]
	\mark(C3 3)[2]
	\mark(C3 4)[2]
	\mark(C3 5)[2]
	\mark(C3 6)[2]
	
	\mark(C4 c)[1]
	\mark(C4 1)[7]
	\mark(C4 2)[2]
	\mark(C4 3)[2]
	\mark(C4 4)[2]
	\mark(C4 5)[2]
}

\def\drawBigFigKey{
	\pgfmathtruncatemacro{\keyx}{0.9cm};
	\pgfmathtruncatemacro{\keyy}{5.5cm};
	
	 \foreach \color / \idx in 
	{markColor1/1, markColor2/2, markColor3/3, markColor4/4, markColor5/5, markColor6/6, markColor7/7, markColor8/8} 
	{ 
		\node [markNode, label={[xshift=0.8cm, yshift=-0.33cm]Phase \idx}, draw=\color, fill=\color] (key \idx) at (\keyx, \keyy - \idx *25pt){};
	}

	\draw[draw=black, line width=0.2mm] (24.3cm, -2.4cm) rectangle ++(4.2cm,7.9 cm);
}
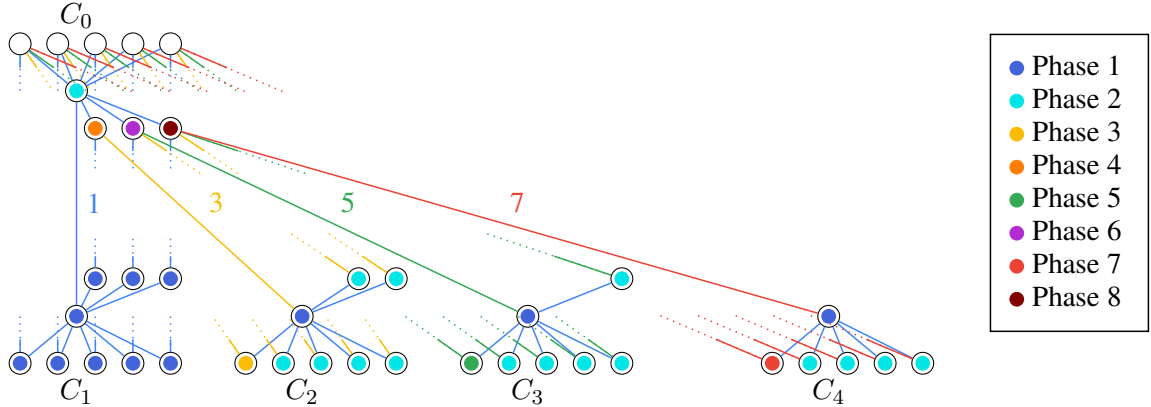
\begin{figure}
	\centering
	\begin{tikzpicture}[scale=\tikzScale]
	\drawBigFig
	\drawBigFigKey
	\end{tikzpicture}
	\caption{An execution of reverse greedy on the metric completion of $G$ for $k=5$ and $n=39$. Facilities are color-coded according to total cost at the time of their removal. The optimal solution gives $O_i = C_i$, while this execution chooses the five unmarked facilities in $C_0$.} \label{fig:lowerBound} 
\end{figure}

Formally, we construct an instance of $k$-center given by $(\hat{d}, \hat{\mathcal{C}})$ for a given $k$ and ${n = \frac{1}{2}(3k-2)(k+1)}$ as follows.
Our $\hat{d}$ will be the shortest path metric of a weighted graph $G$, and so we begin by describing $G$.
First, take the sequence of stars of descending size ${S_{2k-1}, S_{2k-1}, S_{2k-2}, S_{2k-3}, \ldots, S_{k+1}}$, which we will refer to as $C_0, C_1, C_2, \ldots, C_{k-1}$.
Notice that the first two stars ($C_0$ and $C_1$) have the same size.
We will assume that the leaves of each star have some canonical ordering where \emph{corresponding} leaves of $C_i$ and $C_j$ are leaves given the same number in $C_i$ and $C_j$ respectively.
The edges of the stars have weight 1.
Next match the centers of $C_0$ and $C_1$ with weight 1 edges and match the leaves of $C_0$ and $C_1$ with weight 1 edges.
For each subsequent $C_i$, match the leaves of $C_i$ with the corresponding leaves of $C_0$ and match the center of $C_i$ to the $(2k-i)$th leaf of $C_0$, all with edges of weight $2i - 1$.
We illustrate $G$ in \Cref{fig:lowerBound}.

We derive $(\hat{d}, \hat{\mathcal{C}})$ by taking $\hat{\mathcal{C}} = V(G)$ and $\hat{d}$ to be the metric completion of $G$.
Observe that $\OPT =1$ in this instance, since there are $k$ stars $C_i$, and so a feasible solution is to choose the center of each $C_i$.
We now argue that reverse greedy performs poorly on $(\hat{d}, \hat{\mathcal{C}})$.

\begin{lemma}\label{lem:lower}
For every $k$ and every ${n \geq \frac{1}{2}(3k-2)(k+1)}$, there exists an instance of $k$-center for which reverse greedy returns a solution of cost $(2k-2)\cdot \OPT$.
\end{lemma}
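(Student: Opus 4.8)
The plan is to fix, once and for all, a tie-breaking rule for reverse greedy on $(\hat d,\hat{\mathcal C})$ under which the returned set $F_{n-k}$ is exactly the collection of $k$ leaves of $C_0$ that are \emph{not} matched to the center of any star, and then to show that this set has cost $(2k-2)\cdot\OPT=2k-2$. Write $z_i$ for the center of $C_i$ and $\ell^i_j$ for its $j$-th leaf. Since $z_i$ is matched to $\ell^0_{2k-i}$ for $i=2,\dots,k-1$, the matched leaves of $C_0$ are exactly $\ell^0_{k+1},\dots,\ell^0_{2k-2}$, so the target terminal set is $T=\{\ell^0_1,\dots,\ell^0_k\}$. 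Recall that $\OPT=1$.

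I would describe this run in cost-indexed phases, where phase $p$ collects the removals after which $\ckc$ first reaches $p$, and prove by induction on $p$ that the vertex we remove always attains the minimum cost increase (so the run is legal). In \emph{phase $1$} remove every vertex of $C_1$ together with $z_2,\dots,z_{k-1}$: each such vertex has a weight-$1$ edge to a facility that survives the phase ($\ell^1_j$ to $\ell^0_j$, $z_1$ to $z_0$, and each $z_i$ to a surviving leaf of $C_i$), so each removal costs $1$; we defer all other cost-$1$ candidates by tie-breaking, and once $C_1$ and these centers are gone one checks every remaining removal costs at least $2$. In \emph{phase $2$} remove $z_0$ (which sends the client $z_1$ to distance $2$) and all but one leaf of each $C_i$ with $i\ge2$ (each deleted leaf retreats to distance $2$ through the retained leaf of its star); these all cost exactly $2$. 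Then, for $i=2,\dots,k-1$ in increasing order, in \emph{phase $2i-1$} remove the final surviving leaf $\ell^i_1$ of $C_i$, and in \emph{phase $2i$} remove the matched leaf $\ell^0_{2k-i}$. After phase $2k-2$ exactly $T$ remains.

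The heart of the argument, and the main obstacle, is the interleaving in phases $3$ through $2k-2$, which is dictated by how the geometric matching weights $2i-1$ interact with ties. The key observation is that deleting a leaf $\ell^0_m$ of $C_0$ simultaneously orphans \emph{all} the clients matched to it, so its cost increase equals the largest distance to which any still-served such client must retreat; I would maintain as the induction invariant, for each already-deleted client, the facility now serving it and that distance. One then verifies that at the start of phase $2i-1$ the value $2i-1$ is attained simultaneously by $\ell^i_1$ (through $z_i$) and by $\ell^0_{2k-i}$, by every $\ell^0_{2k-i'}$ with $i'>i$, and by every surviving leaf of $T$ (each through its deepest already-consolidated star $C_{i-1}$), with nothing cheaper; breaking this tie in favor of $\ell^i_1$ is therefore legal. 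Once $\ell^i_1$ is gone, $z_i$ is forced onto $\ell^0_{2k-i}$, and here the geometry bites: $\ell^0_{2k-i}$ is the unique surviving leaf of $C_0$ that has no $C_i$-leaf matched to it (its index $2k-i$ exceeds the number $2k-i-1$ of leaves of $C_i$), so its removal releases only $z_i$, to distance $2i$, whereas deleting any other surviving leaf of $C_0$ now releases a freshly-consolidated leaf of $C_i$ to distance $2i+1$. Thus in phase $2i$ the leaf $\ell^0_{2k-i}$ is the \emph{strict} minimizer, pinning the order down. Carrying out this distance bookkeeping cleanly across all phases is the bulk of the work.

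Finally I would evaluate $\ckc(T)$. The center $z_{k-1}$ of the smallest star reaches $T$ most cheaply along $z_{k-1}\to\ell^{k-1}_j\to\ell^0_j$, of length $1+(2(k-1)-1)=2k-2$ (its matched route through $\ell^0_{k+1}$ and $z_0$ is longer), so $\ckc(T)\ge 2k-2$; a direct check that every remaining vertex of $G$ lies within $2k-2$ of some $\ell^0_j$ with $j\le k$ then gives $\ckc(T)=2k-2=(2k-2)\cdot\OPT$, matching \Cref{fig:lowerBound}. For a general $n\ge\tfrac12(3k-2)(k+1)$ I would pad the instance with the extra $n-\tfrac12(3k-2)(k+1)$ points placed coincident with $\ell^0_1$ (co-located clients, at distance $0$); reverse greedy deletes these first at cost $0$ without changing $\OPT$, after which the analysis above applies verbatim.
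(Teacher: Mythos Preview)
Your proposal follows essentially the same construction and phase analysis as the paper: the same stars $C_0,\dots,C_{k-1}$, the same removals in each phase, and the same terminal set of $k$ leaves of $C_0$.

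One correction is needed in your phase-$2i$ analysis. You claim that $\ell^0_{2k-i}$ is the \emph{strict} minimizer because deleting any other surviving leaf $\ell^0_j$ sends the freshly-consolidated leaf $\ell^i_j$ to distance $2i+1$. This is not right: $\ell^i_j$ can still reach the (as yet unremoved) facility $\ell^0_{2k-i}$ along the path $\ell^i_j\to z_i\to\ell^0_{2k-i}$ of length $1+(2i-1)=2i$, so removing $\ell^0_j$ also costs exactly $2i$. Phase $2i$ is therefore a tie, just like your odd phases, and the paper simply resolves it by tie-breaking rather than by strictness; your ``pinning the order down'' remark should be weakened accordingly, but the run remains legal.

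A minor second point: padding with points coincident with $\ell^0_1$ at distance $0$ technically violates the metric axiom $d(x,y)=0\Rightarrow x=y$. The paper instead appends extra leaves to $C_0$ and has reverse greedy discard them first; this keeps the space a genuine metric with no change to $\OPT$ or the subsequent analysis.
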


\begin{proof}
Consider $(\hat{d}, \hat{\mathcal{C}})$ as described above and illustrated in \Cref{fig:lowerBound}.
We provide a particular series of choices that reverse greedy could make on $(\hat{d}, \hat{\mathcal{C}})$ for any $k$ and ${n =\frac{1}{2}(3k-2)(k+1)}$.
It is easy to see that our analysis can be adapted to the case when ${n > \frac{1}{2}(3k-2)(k+1)}$ by adding additional leaves to $C_0$ and then having reverse greedy immediately remove these leaves from its solution.

We will split our analysis of these choices into `phases' of facility removals, where the Phase $r$ is the set of iterations for which reverse greedy's solution costs $r$, for $r \in [2k-2]$.
That is, the $r$th phase consists of the set of $i$ for which $\ckc(F_i) = r$. See  \Cref{fig:lowerBound} for an example of the phases of reverse greedy we consider in our analysis.

Notice that after removing all facilities in $C_1$ and the center facilities in $C_2, \ldots, C_{k-1}$, the cost of reverse greedy's solution is $1$, and no other facilities can be removed without increasing the cost.
Thus, let Phase $1$ be this sequence of removals.
Next, observe that the center of $C_0$ and all but one leaf of each of $C_2, \ldots, C_{k-1}$ may be removed while incurring a cost of 2.
The removal of the center of $C_0$ requires that the center of $C_1$ be served by another facility in $C_0$ at cost 2, and the removal of the other leaves of $C_2, \ldots, C_{k-1}$ require that they be served at cost 2 by the remaining leaves.
Thus, we let Phase 2 consist of the removal of the center of $C_0$ along with the removal of all but one leaf from every other $C_i$. 

We now argue inductively that given Phase $1$ and Phase $2$, over the course of the subsequent pairs of phases $r=2i-1$ and $r'=2i$ from $i =2$ to $i=k-1$, reverse greedy first empties $C_i$ and then removes a single facility from $C_0$.
Notice that once all facilities from $C_{i-1}$ have been removed, they `bind' the facilities in $C_0$: if any leaf node in $C_0$ is removed then the leaf node in $C_{i-1}$ to which it is matched would have to travel distance at least $2i-1$ to the next nearest facility.
Thus, we have that after Phase $(2i-2)$ removing any facility in $C_0$ would increase our cost to $r$.
Moreover, we know that removing the single facility from $C_i$ also increases our cost to $r$, since doing so would cause every leaf node in $C_i$ to travel $2i-1$ to its matched leaf node in $C_0$, and the center of $C_i$ to travel to the $(2k-i)$th leaf of $C_0$.
Additionally, removing any facility from $C_j$ for $j > i$ increases the cost to $> 2i$.
Thus, in Phase $r$ let reverse greedy remove the single remaining leaf from $C_i$ and leave $C_0$ untouched. 

Now all leaves in $C_0$ are serving clients in $C_i$ at cost $r$, and if any were removed then their corresponding leaves in $C_i$ would be served at cost $r+1=r'$.
Thus, for Phase $r'$, we let reverse greedy remove the leaf of $C_0$ to which the center of $C_i$ is matched. 
Then this center client is served at a cost of $r'$ (by a center in $C_0$ with which one of the leaves in $C_i$ is matched), and if any facility from $C_0$ were removed then its matched client in $C_i$ would have cost $r'+1$. At this point the stage is set for the next pair of phases.

This series of phases terminates with all $C_1, \ldots, C_{k-1}$ empty and $k$ leaves of $C_0$ remaining. The center of $C_{k-1}$ is served at a cost of $2k-2$, implying that the final cost of reverse greedy is at least $(2k-2) \cdot \OPT$. 
\end{proof}

\section{Future Work}
We conclude with two open questions regarding the performance of reverse greedy. 

Our first question deals with interpolation. As mentioned, previous work of Chrobak et al. \cite{chrobak2005reverse} shows that reverse greedy is an $O(\log n)$-approximation for $k$-median while our analysis has demonstrated that reverse greedy is a $\Omega(\sqrt{n})$-approximation for $k$-center. The former can be seen as optimization over an $l_1$ metric whereas the latter can be seen as optimization of an $l_\infty$ metric. A natural question, then, is what sort for approximation ratio reverse greedy attains for the corresponding $l_p$ problems for $1 < p < \infty$. For example, what approximation does reverse greedy attain on $k$-means, the corresponding $l_2$ optimization problem? It would be quite interesting if the approximation guarantee of reverse greedy smoothly and monotonically varied between these two extremes. 

Our second open question concerns the beyond-worst-case performance of reverse greedy for $k$-center. While our analyses have shown that \emph{in the worst case} reverse greedy performs poorly for $k$-center, it might very well be the case that reverse greedy attains a good approximation ratio under suitable beyond-worst-case assumptions. It is easy to verify that our lower bound construction is quite brittle, in that it fails to witness a $2k - 2$ lower bound under even small perturbations of its weights. Furthermore, one can show that if the $O_t$s of a solution are all at least $2 \cdot \OPT$ apart then reverse greedy always attains a $2$-approximation. Such a separation of the $O_t$s is similar to the beyond-worst-case notion of perturbation-resilience which has been studied in the center location literature \cite{awasthi2012center, balcan2012clustering}. This leads us to conclude with our second open question: is reverse greedy an $O(1)$-approximation under natural beyond-worst-case assumptions? Such a characterization could elevate reverse greedy from a lowly, worst-case bad algorithm to a simple, practical algorithm for $k$-center.

\section{Acknowledgments}
We would like to thank R. Ravi and Goran Zuzic for helpful discussions.

\bibliographystyle{alpha}
\bibliography{main}

\end{document}